\newtheorem{theorem}{Theorem}
\begin{document}
\begin{CJK}{UTF8}{gbsn}
\begin{frontmatter}



\title{Application of the Quantum Approximate Optimization Algorithm in Solving the Total Domination Problem}


\author[a,b]{Haoqian Pan\corref{cor1}}
\ead{52215500040@stu.ecnu.edu.cn}
\author[b]{Hang Yuan}
\author[b]{Yang Liu}
\author[a]{Changhong Lu}
\author[c]{Wanfang Chen}
\author[a]{Shiyue Wang}
\affiliation[a]{organization={School of Mathematical Sciences,  Key Laboratory of MEA(Ministry of Education) \& Shanghai Key Laboratory of PMMP, East China Normal University}, 
                        city={Shanghai},
                        postcode={200241},
                        country={China}}
\affiliation[b]{
  organization={China Communications Information \& Technology Group Co., Ltd.},
  city={Beijing},
  postcode={101399},
  country={China}
 }   
 \affiliation[c]{organization={School of Mathematical Sciences, University of Science and Technology of China}, 
                        city={Hefei},
                        postcode={230026},
                        country={China}}

\cortext[cor1]{Corresponding author.}

\begin{abstract}
 Recent advancements in quantum computing have spurred substantial research into the application of quantum algorithms to combinatorial optimization problems. Among these challenges, the Total Domination Problem (TDP) emerges as a classic and critical paradigm in the field. For a graph \( G(V, E) \), TDP entails finding a minimal subset \( D \subseteq V \) that contains no isolated vertices, where every vertex not in \( D \) has at least one neighbor in \( D \). TDP finds extensive applications across domains such as computer networks, social networks, and communications. Since the latter half of the last century, research efforts have focused on establishing its NP-completeness and developing solution algorithms, which have become foundational to combinatorial mathematics. Despite this rich history, the application of quantum algorithms to TDP remains largely underexplored. In this study, we present a pioneering application of the Quantum Approximate Optimization Algorithm (QAOA) to tackle TDP, evaluating its efficacy across a diverse set of parameters. This paper proves that the upper bound on the number of qubits required to solve TDP is \( 2|V| + |V| \log_{2}\left( \frac{2|E|}{|V|} - 1 \right) \). Our experimental findings demonstrate that QAOA is effective in addressing TDP: under most parameter combinations, it successfully computes a valid total dominating set (TDS). However, the algorithm’s performance in identifying the optimal TDS is contingent upon specific parameter choices, revealing a significant bias in the distribution of effective parameter points. This research contributes valuable insights into the potential of quantum algorithms for solving TDP and lays a solid groundwork for future investigations in this area.  
\end{abstract}

\begin{keyword}
    Quantum approximate optimization algorithm \sep Total domination problem \sep Combinatorial optimization
\end{keyword}

\end{frontmatter}


\section{Introduction}\label{sec:Introduction}

In recent years, advancements in quantum computing \citep{RN421, RN439, RN459} have driven the development of diverse quantum algorithms, including the Quantum Approximate Optimization Algorithm (QAOA) \citep{RN436}, the Quantum Adiabatic Algorithm \citep{farhi2001quantum}, and Quantum Annealing \citep{kadowaki1998quantum}. This progress has sparked significant interest in applying quantum algorithms to combinatorial optimization problems, such as the Max Clique Problem \citep{RN462}, Max Cut Problem \citep{RN436}, Max Flow Problem \citep{RN420}, and Minimal Vertex Cover Problem \citep{RN334}. Despite the growing momentum around quantum algorithms in combinatorial optimization, a notable research gap persists for specific problems—particularly the Total Domination Problem (TDP), a variant of the Domination Problem (DP). Prior to introducing TDP, we first present a formal definition of the dominating set (DS). In graph theory, a DS for a graph \( G(V, E) \) is defined as a subset \( D \subseteq V \) where every vertex \( v \in V \setminus D \) has at least one neighbor in \( D \). A total dominating set (TDS) imposes the additional constraint that no isolated vertices exist within \( D \). Examples of a DS and a TDS are provided in Fig. \ref{fig:ds} and Fig. \ref{fig:tds}, respectively.   

\begin{figure}[htbp]
  \centering
  \begin{tikzpicture}[scale=1.2,
    every node/.style={circle, draw, minimum size=1em, inner sep=1pt},
    every edge/.style={draw, thick}]
    \node(0) at (1,1) {0};
    \node[red, fill=red!20](1) at (2,1) {1};
    \node(2) at (3,1) {2};
    \node[red, fill=red!20](3) at (4,1) {3};
    \draw (0) edge (1)
          (1) edge (2)
          (2) edge (3);
  \end{tikzpicture}
  \caption {A DS $\{1,3\}$}
  \label{fig:ds}
\end{figure}

\begin{figure}[htbp]
  \centering
  \begin{tikzpicture}[scale=1.2,
    every node/.style={circle, draw, minimum size=1em, inner sep=1pt},
    every edge/.style={draw, thick}]
    \node(0) at (1,1) {0};
    \node[red, fill=red!20](1) at (2,1) {1};
    \node[red, fill=red!20](2) at (3,1) {2};
    \node(3) at (4,1) {3};
    \draw (0) edge (1)
          (1) edge (2)
          (2) edge (3);
  \end{tikzpicture}
  \caption {A TDS $\{1,2\}$}
  \label{fig:tds}
\end{figure}

The objectives of DP and TDP are to find the smallest DS and TDS. Since the last century, significant strides have been made in studying TDP, with comprehensive reviews available in \citep{RN427,RN473,RN474,RN481}. TDP has been established as NP-complete for various types of graphs, including general graphs, bipartite graphs, and comparability graphs \citep{RN463}, as well as split graphs \citep{RN464}, chordal graphs \citep{RN465}, circular graphs \citep{RN467}, line graphs, and line graphs of bipartite and claw-free graphs \citep{RN466}. Research into the NP-completeness of TDP has led to the development of algorithms tailored to specific graph classes, often aimed at minimizing time complexity. For instance, studies have investigated interval graphs \citep{RN477,RN470}, circular-arc graphs \citep{RN471,RN472}, cocomparability graphs \citep{RN477,RN468}, asteroidal triple-free graphs \citep{RN478}, distance hereditary graphs \citep{RN479}, and tree graphs \citep{RN474}. To date, however, the literature lacks investigations into the application of quantum computing techniques to the TDP. Although a handful of studies have investigated quantum algorithms for classical DP, for instance, \citet{RN415} solved the classical DP via indirect modeling followed by result correction, and \citet{zhang2024quantum} addressed the DP using Grover’s algorithm. Research on quantum algorithms for the TDP remains unexplored. The potential of quantum algorithms to effectively solve TDP, as well as their performance metrics, remains an open question for future research.

The main contributions of this paper are as follows:  

(1) This paper presents, for the first time, the Hamiltonian formulation of TDP. It provides and proves the strict upper bound on the number of qubits required to solve TDP, and demonstrates the reduction in the number of qubits compared with classical DP. For the first time, it is mathematically proven that TDP is simpler than DP at the quantum computing level.  

(2) This paper presents the first application of QAOA to solve TDP on a quantum simulator. We validated the effectiveness of QAOA across 128 parameter combinations. The experimental results show that QAOA successfully computes valid TDS for 93 of these combinations, with 12 combinations yielding the optimal TDS. Additionally, we found that under certain parameter settings, the accuracy of the final sampling results from QAOA can reach approximately 70\%, while the optimal probability can approach 25\%.  

Based on the above contributions, we affirm that utilizing QAOA to solve TDP holds significant potential and merits further investigation. This work represents one of the earliest contributions to the application of quantum computing in addressing TDP, and our findings will provide valuable insights for subsequent research employing quantum algorithms for this problem.  

The structure of this paper is as follows. In Section \ref{sec:problemmodeling}, we analyze TDP and present its 0-1 integer programming model, gradually deriving the process of converting it into a Hamiltonian. In Section \ref{sec:QAOA}, we outline the basic workflow of QAOA. Section \ref{sec:Experiment} details our experiments using a quantum simulator on a 6-node graph, comparing the performance of QAOA in solving TDP across different parameter combinations. Finally, Section \ref{sec:conclusion} provides a summary of the entire paper. 

\section{Problem modeling}\label{sec:problemmodeling}

This paper primarily investigates the use of QAOA to solve TDP. A key challenge in applying quantum algorithms to combinatorial optimization problems is effectively modeling these problems as Quadratic Unconstrained Binary Optimization (QUBO) models. The basic form of the QUBO model is presented in Eq. \ref{eq:qubo}, where \( x \) is a vector composed of 0-1 variables, and the matrix \( Q \) is the coefficient matrix, also known as the QUBO matrix
\begin{equation}
  \text{minimize}/\text{maximize} \quad y = x^{t}Qx \label{eq:qubo}.
\end{equation}
The QUBO model is structurally similar to the Ising model, which serves as the input form for the QAOA. The Hamiltonian of the Ising model is given in Eq. \ref{eq:ising}, where \( J_{ij} \) and \( h_{j} \) are coefficients, and \( \hat{\sigma}^{z} \) represents the \( z \)-component of the spin. In the Pauli representation, the matrix form of this operator is \( \hat{\sigma}^{z} = \begin{bmatrix}
  1 & 0 \\
  0 & -1
  \end{bmatrix} \). Its two eigenvalues are 1 and -1, corresponding to the eigenstates \( \ket{0} = \begin{bmatrix}
  1 \\
  0
  \end{bmatrix} \) and \( \ket{1} = \begin{bmatrix}
  0 \\
  1
  \end{bmatrix} \), respectively. These eigenstates correspond to the two possible directions of the spin operator \( \hat{\sigma} \) in the \( z \)-component.
\begin{equation}
  \hat{H}_{Ising} = \sum\limits_{i,j} J_{ij} \hat{\sigma}_{i}^{z}\hat{\sigma}_{j}^{z} + \sum\limits_{j} h_{j} \hat{\sigma}_{j}^{z}\label{eq:ising}
\end{equation}
It can be observed that the transition from Eq. \ref{eq:qubo} to Eq. \ref{eq:ising} can be achieved by first replacing the 0-1 variables with variables taking values \(\{-1,1\}\), followed by a process of operatorization. Therefore, the primary focus of this chapter will be on the QUBO modeling of the TDP. Our approach begins by modeling the TDP as a 0-1 integer programming problem. We convert the constraints of the 0-1 integer programming model into quadratic penalties and incorporate them into the original objective function, ultimately deriving the QUBO model. We start by providing a complete definition of the TDP problem. Let's recall the definition of TDP. Given a graph \( G(V, E) \), its TDS, denoted as \( D \), is a subset of \( V \) that contains no isolated vertices. For every vertex \( v \in V \setminus D \), there exists at least one vertex in \( D \) to which it is connected. The objective of the TDP is to find such a set \( D \) with the minimal size. Notably, unlike classical DP, TDP requires that there are no isolated vertices within \( D \). Next, we present the 0-1 integer programming model for the TDP.
\begin{alignat}{2}
  \min_{\{X_{i}\}} \quad & \sum\limits_{i=1}^{|V|} X_{i} \label{eq:01tdomtarget},\\
  \mbox{s.t.}\quad
  &\sum\limits_{j \in N(i)} X_{j} \ge 1 \quad \forall i \in V  \label{eq:01tdomcost},\\
  &X_{i} \in \{0,1\}  \quad \forall i \in V.
\end{alignat}
In this model, when a vertex \( i \in D \), \( X_{i} = 1 \); otherwise, \( X_{i} = 0 \). The objective function in Eq. \ref{eq:01tdomtarget} represents the size of \( D \). For each vertex \( i \), it is important to note that the constraint in Eq. \ref{eq:01tdomcost} utilizes the open neighborhood \( N(i) \) instead of the closed neighborhood \( N[i] \). In classical DP, the closed neighborhood would typically be used. However, for TDP, which requires that there are no isolated vertices in \( G[D] \), each vertex in the dominating set \( D \) must have at least one neighbor that is also in \( D \). This condition differentiates TDP from DP, making it more stringent. 

With the 0-1 integer programming model for TDP established, we now proceed to convert this model into a QUBO formulation. Since \( X_{*} \in \{0,1\} \), it follows that \( X_{*} = X_{*}^{2} \), which means the original objective function already meets the requirements of the QUBO model. Our next task is to convert the constraint in Eq. \ref{eq:01tdomcost} into quadratic penalties and incorporate these penalties into the original objective function. The general form of Constraint \ref{eq:01tdomcost} is
\begin{equation}
  X_{1} + X_{2} + \dots + X_{n} \geq 1, \quad n = |N(i)|, \quad \forall i \in V \label{eq:ctnormal}.
\end{equation}
When \( n = 1 \) or \( n = 2 \), we can transform them according to \cite{RN416} as
\begin{equation*}
  P \cdot (X_{1} - 1) ^ {2} \quad or \quad P \cdot (1- X_{1} - X_{2} + X_{1} \cdot X_{2}).
\end{equation*}
 Here, \( P \) is the punishment coefficient. When \( n \geq 3 \), we introduce a slack variable \( S \) to convert the inequality \( X_{1} + X_{2} + \dots + X_{n} \geq 1 \) into an equality constraint
\begin{equation}
  X_{1} + X_{2} + \dots + X_{n} - S - 1 = 0 \label{eq:sctdom}.
\end{equation}
Since \( X_{1} + X_{2} + \dots + X_{n} - 1 \geq 0 \) and \( X_{*} \in \{0,1\} \), it follows that \( S \in [0, n-1] \), where \( S \) can take all integer values within this range. After estimating the range of \( S \), our next task is to represent \( S \) using additional 0-1 variables. Considering the range of \( S \), we can represent \( S \) as
\begin{equation}
  S = \sum\limits_{i=1}^{bl_{n-1}-1} X_{i}^{\prime}\cdot 2^{i-1} + (n - 1 - \sum\limits_{i=1}^{bl_{n-1}-1}2^{i-1}) \cdot X_{bl_{n-1}}^{\prime} .\label{eq:sc}
\end{equation}
Here, the symbol \( bl_{n} \) denotes the length of the binary representation of the integer \( n \). For example, \( bl_{3} = 2 \) since the binary representation of 3 is \( 11 \). Additionally, \( X_{*}^{\prime} \in \{0,1\} \) represents the newly introduced variables to express \( S \). After completing these preparations, we can begin the conversion of Constraint \ref{eq:01tdomcost} into quadratic penalties. First, we can rewrite Eq. \ref{eq:sctdom} as
\begin{equation}
  P \cdot (X_{1} + X_{2} + \dots + X_{n} - S - 1) ^ {2} \label{eq:rawqp}.
\end{equation}
By substituting Eq. \ref{eq:sc} into Eq. \ref{eq:rawqp}, we obtain
\begin{equation}
  P \cdot (X_{1} + X_{2} + \dots + X_{n} - [\sum\limits_{i=1}^{bl_{n-1}-1} X_{i}^{\prime} \cdot 2^{i-1} + (n - 1 - \sum\limits_{i=1}^{bl_{n-1}-1}2^{i-1}) \cdot X_{bl_{n-1}}^{\prime}] - 1) ^ {2} \label{eq:qp}.
\end{equation}
Finally, we arrive at the QUBO model for the TDP, which can be expressed as
\begin{equation}
  \begin{split} 
  &\min_{\{X,X^{\prime}\}} \quad  \sum\limits_{i=1}^{|V|} X_{i}\\
  &+ \sum\limits_{i \in V, |N(i)| \geq 3} P \cdot [\sum\limits_{j \in N(i)}X_{j}  - (\sum\limits_{i=1}^{bl_{|N(i)|-1}-1} X_{i}^{\prime} \cdot 2^{i-1} + (|N(i)| - 1 - \sum\limits_{i=1}^{bl_{|N(i)|-1}-1}2^{i-1}) \cdot X_{bl_{|N(i)|-1}}^{\prime}) - 1]^{2}\\
  & + \sum\limits_{i \in V, |N(i)| = 2, N(i) = \{j,k\}} P \cdot (1- X_{j} - X_{k} + X_{j} \cdot X_{k})\\
  & + \sum\limits_{i \in V, |N(i)| = 1, N(i) = \{j\}} P \cdot (X_{j} - 1) ^ {2}
  \label{eq:tdomqubo}.
  \end{split}
\end{equation}
At this point, we have completed the conversion of the TDP into a QUBO model. To utilize QAOA to solve this model, we need to convert it into a Hamiltonian. First, we need to convert all 0-1 variables \( X \) and \( X^{\prime} \) into binary variables \( s \) that take values in \( \{-1, 1\} \). The conversion is
\begin{equation}
  X_{i} = \frac{s_{i} + 1}{2}.
\end{equation}
After the substitution, Eq. \ref{eq:tdomqubo} becomes
\begin{equation}
  \begin{split} 
  &\min_{\{s,s^{\prime}\}} \quad  \sum\limits_{i=1}^{|V|} \frac{s_{i} + 1}{2}\\
  &+ \sum\limits_{i \in V, |N(i)| \geq 3} P \cdot [\sum\limits_{j \in N(i)}\frac{s_{j} + 1}{2}\\  
  &- (\sum\limits_{i=1}^{bl_{|N(i)|-1}-1} \frac{s_{i}^{\prime} + 1}{2}\cdot 2^{i-1} + (|N(i)| - 1 - \sum\limits_{i=1}^{bl_{|N(i)|-1}-1}2^{i-1}) \cdot \frac{s_{bl_{|N(i)|-1}}^{\prime} + 1}{2}) - 1]^{2}\\
  & + \sum\limits_{i \in V, |N(i)| = 2, N(i) = \{j,k\}} P \cdot (1- \frac{s_{j} + 1}{2} - \frac{s_{k} + 1}{2} + \frac{s_{j} + 1}{2} \cdot \frac{s_{k} + 1}{2})\\
  & + \sum\limits_{i \in V, |N(i)| = 1, N(i) = \{j\}} P \cdot (\frac{s_{j} + 1}{2} - 1) ^ {2}.
  \label{eq:tdomqubos}
  \end{split}
\end{equation}
Next, by replacing \( s \) and \( s^{\prime} \) with the Pauli-Z operator \( \hat{\sigma}^{z} \), we can ultimately obtain the Hamiltonian
\begin{equation}
  \begin{split} 
  &\hat{H}_{c} = \sum\limits_{i=1}^{|V|} \frac{\hat{\sigma}_{i}^{z} + 1}{2}   \\
  &+ \sum\limits_{i \in V, |N(i)| \geq 3} P \cdot [\sum\limits_{j \in N(i)}\frac{\hat{\sigma}_{j}^{z} + 1}{2}\\  
  &- (\sum\limits_{i=1}^{bl_{|N(i)|-1}-1} \frac{(\hat{\sigma}_{i}^{z})^{\prime} + 1}{2} \cdot 2^{i-1} + (|N(i)| - 1 - \sum\limits_{i=1}^{bl_{|N(i)|-1}-1}2^{i-1}) \cdot \frac{(\hat{\sigma}_{bl_{|N(i)|-1}}^{z})^{\prime} + 1}{2}) - 1]^{2} \\
  & + \sum\limits_{i \in V, |N(i)| = 2, N(i) = \{j,k\}} P \cdot (1- \frac{\hat{\sigma}_{j}^{z}+ 1}{2} - \frac{\hat{\sigma}_{k}^{z} + 1}{2} + \frac{\hat{\sigma}_{j}^{z} + 1}{2} \cdot \frac{\hat{\sigma}_{k}^{z} + 1}{2})\\
  & + \sum\limits_{i \in V, |N(i)| = 1, N(i) = \{j\}} P \cdot (\frac{\hat{\sigma}_{j}^{z} + 1}{2} - 1) ^ {2}.
  \label{eq:htdom}
  \end{split}
\end{equation}
We can estimate the number of qubits required for this modeling approach.
\begin{theorem}
  The upper bound on the number of qubits required to model TDP using the method presented in this paper is \(2|V| + |V| \log_{2}\left( \frac{2|E|}{|V|} - 1 \right)\).
\end{theorem}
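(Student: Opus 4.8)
The plan is to read off the qubit count directly from the cost Hamiltonian $\hat H_c$ in Eq.~(\ref{eq:htdom}) and then control it through the average degree of $G$. Writing $d_i := |N(i)|$, I would split the count into two parts. The $\hat\sigma_i^z$ operators attached to the original $0$--$1$ variables $X_i$ contribute exactly $|V|$ qubits, one per vertex. The only other qubits are the slack operators $(\hat\sigma^z)'$ introduced through the encoding of $S$ in Eq.~(\ref{eq:sc}), and these appear solely for vertices with $d_i \ge 3$; for each such vertex that encoding uses exactly $bl_{d_i-1}$ qubits. Hence the exact count is
\[
Q \;=\; |V| \;+\; \sum_{i \in V,\, d_i \ge 3} bl_{d_i-1}.
\]

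The next step is to bound each slack term and then aggregate. Since $bl_m = \lfloor \log_2 m\rfloor + 1 \le \log_2 m + 1$, I have $bl_{d_i-1} \le \log_2(d_i-1) + 1$. Because at most $|V|$ vertices satisfy $d_i \ge 3$, the accumulated ``$+1$'' terms contribute at most $|V|$, so $Q \le 2|V| + \sum_{i:\,d_i\ge 3}\log_2(d_i-1)$. To turn the remaining sum into a closed form I would invoke the only global invariant at hand, the handshaking identity $\sum_{i\in V} d_i = 2|E|$, together with the concavity of $\log_2$. Jensen's inequality gives $\sum_i \log_2(d_i-1) \le |V|\,\log_2\!\bigl(\tfrac1{|V|}\sum_i (d_i-1)\bigr)$, and substituting $\sum_i (d_i-1) = 2|E|-|V|$ yields $|V|\log_2\!\bigl(\tfrac{2|E|}{|V|}-1\bigr)$. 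Assembling the pieces produces the claimed bound $Q \le 2|V| + |V|\log_2\!\bigl(\tfrac{2|E|}{|V|}-1\bigr)$.

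The main obstacle is the mismatch between the two sums: the slack count ranges only over vertices with $d_i\ge 3$, whereas the Jensen estimate is cleanest over all $|V|$ vertices, and $\log_2(d_i-1)$ is nonpositive or undefined exactly for the low-degree vertices that the slack sum omits. I would close this gap by casting the final step as a worst-case analysis: among all degree sequences with $\sum_i d_i = 2|E|$, concavity forces $\sum_i(\log_2(d_i-1)+1)$ to be maximized at the uniform degree $d_i = 2|E|/|V|$, so treating every vertex as a slack contributor and passing to the average degree can only enlarge the true count. Making this rigorous requires pinning down the regime in which the argument is valid --- essentially that the contributing vertices have degree at least $3$, equivalently that the average degree satisfies $2|E|/|V| \ge 3$ so the logarithm stays positive --- since for very sparse graphs with a few high-degree hubs (for instance a star) the raw inequality is delicate. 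I expect this reconciliation of the restricted sum with the uniform-degree maximizer to be the crux; Steps~1 and~2 are routine bookkeeping.
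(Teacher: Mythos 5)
Your proposal follows the same route as the paper's own proof: the exact count $|V| + \sum_{i:\,d_i\ge 3} bl_{d_i-1}$, the estimate $bl_m = \lfloor\log_2 m\rfloor + 1 \le \log_2 m + 1$, and an averaging step via the handshake identity $\sum_{i\in V} d_i = 2|E|$; the paper phrases that last step as AM--GM applied to $\bigl(\prod_{i\in V}(d_i-1)\bigr)^{1/|V|}$ while you phrase it as Jensen's inequality for $\log_2$, which is the same inequality. The one substantive difference is the ``crux'' you flag, and your instinct there is sharper than the paper's: the paper starts from $|V| + \sum_{i\in V}\bigl(\lfloor\log_2(d_i-1)\rfloor+1\bigr)$, summing over \emph{all} vertices from the outset, which silently commits exactly the mismatch you describe. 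For a degree-$2$ vertex this merely over-counts ($\lfloor\log_2 1\rfloor + 1 = 1 \ge 0$), but for a degree-$1$ vertex the term $\log_2(d_i-1)$ is undefined, and the problem is not cosmetic: for the star $K_{1,3}$ the method needs $4 + bl_2 = 6$ qubits while the claimed bound evaluates to $2\cdot 4 + 4\log_2\bigl(\tfrac{6}{4}-1\bigr) = 4$, so the theorem as stated is false for graphs with pendant vertices, and no proof can close the gap without an added hypothesis. The clean fix is a minimum-degree assumption rather than your average-degree condition: if every vertex has $d_i \ge 2$, then each vertex omitted from your restricted sum contributes $\log_2(d_i-1) = 0$, so extending the sum to all of $V$ costs nothing, Jensen applies verbatim, and $2|E|/|V| - 1 \ge 1$ keeps the logarithm nonnegative. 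Your condition $2|E|/|V| \ge 3$ does not by itself repair the argument, since a graph can have large average degree and still contain pendant vertices, for which the restricted sum cannot be compared to the full-sum Jensen bound without additional work. Under $\delta(G)\ge 2$ your argument is complete --- and it is then also the repaired version of the paper's proof.
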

\begin{proof}
  For an arbitrary graph \( G = (V, E) \), the number of qubits required to solve TDP on this graph arises from two components. The first component consists of \( |V| \) qubits, which are needed to map each vertex. The second component comprises the additional qubits that may be introduced when converting the total condition into a quadratic penalty term. According to Eq. \ref{eq:ctnormal}, no additional qubits are required when \( |N(i)| = 1 \) or \( 2 \); when \( |N(i)| \geq 3 \), based on Eq. \ref{eq:sc}, the number of additional qubits needed is \( bl_{n-1} \), where \( n = |N(i)| \). As per the definition in this paper, it is straightforward to observe that \( bl_{n} = \lfloor \log_{2} n \rfloor + 1 \). Let \( d_{i} \) denote the degree of vertex \( i \), such that \( d_{i} = |N(i)| \) for all \( i \in V \). The total number of qubits will not exceed
  \begin{align}
    & |V| + \sum\limits_{i \in V} (\lfloor \log_{2} (d_{i} - 1) \rfloor + 1)\\
   \leq & 2|V| + \sum\limits_{i \in V} \lfloor \log_{2} (d_{i} - 1) \rfloor \\
   \leq & 2|V| + \sum\limits_{i \in V} \log_{2} (d_{i} - 1) \\
   = & 2|V| + \frac{1}{|V|}|V| \log_{2} \prod_{i \in V} (d_{i} - 1) \\
   = & 2|V| + |V| \log_{2} (\prod_{i \in V} (d_{i} - 1))^{\frac{1}{|V|}} \\
   \leq & 2|V| + |V| \log_{2} \frac{\sum\limits_{i \in V}(d_{i} - 1)}{|V|}\\
   = & 2|V| + |V| \log_{2} \frac{2|E| - |V|}{|V|} \\
   = & 2|V| + |V| \log_{2}(\frac{2|E|}{|V|} - 1).
  \end{align}
\end{proof}

In addition to deriving an upper bound on the number of qubits required for TDP, we can also report an interesting observation: for the same graph \( G = (V, E) \), TDP actually requires fewer qubits than DP. Let \( g \) denote the reduction in the number of qubits required for TDP compared with DP. Partition the vertex set as \( V = V_{0} \cup V_{1} \cup V_{2} \cup V_{\ge 3} \), where \( V_{j} := \{ i \mid i \in V \text{ and } d_{i} = j \} \) and \( V_{\ge j} := \{ i \mid i \in V \text{ and } d_{i} \ge j \} \). Using this notation, we can state and prove Theorem \ref{theorem:gap}.  

\begin{theorem} \label{theorem:gap}
     $ 2|V_2| \leq g \le 2|V_2| + |V_{\ge 3}|$.
\end{theorem}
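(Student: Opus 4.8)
The plan is to compare the qubit budget of TDP with that of DP vertex by vertex. As established in the preceding theorem, the qubit count for TDP on $G=(V,E)$ is $|V|$ (one per vertex) plus the slack qubits $\sum_{i \in V,\, d_i \ge 3} bl_{d_i-1}$ introduced to linearize the $\ge 3$ constraints, where $bl_n = \lfloor \log_2 n\rfloor + 1$. The key structural observation is that the DP model is identical except that each constraint ranges over the \emph{closed} neighborhood $N[i]$ rather than the open neighborhood $N(i)$; hence vertex $i$ contributes a constraint with $|N[i]| = d_i + 1$ terms instead of $d_i$. Applying the same slack-variable construction to DP, vertex $i$ needs slack qubits exactly when $d_i + 1 \ge 3$, i.e. $d_i \ge 2$, and then requires $bl_{(d_i+1)-1} = bl_{d_i}$ of them. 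Since the $|V|$ vertex qubits are common to both models, $g$ is precisely the difference of the two slack totals, so I would write $g = \sum_{i \in V} \delta_i$ with $\delta_i$ the per-vertex slack gap.

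Next I would evaluate $\delta_i$ on the degree partition $V = V_0 \cup V_1 \cup V_2 \cup V_{\ge 3}$. For $i \in V_0 \cup V_1$ (degree $0$ or $1$), neither model introduces slack qubits, because $d_i \le 1$ on the TDP side and $d_i + 1 \le 2$ on the DP side; thus $\delta_i = 0$. For $i \in V_2$, TDP introduces no slack (degree exactly $2$), while DP introduces $bl_{d_i} = bl_2 = \lfloor \log_2 2\rfloor + 1 = 2$, so $\delta_i = 2$ and these vertices contribute exactly $2|V_2|$. For $i \in V_{\ge 3}$, both models introduce slack, and $\delta_i = bl_{d_i} - bl_{d_i-1} = \lfloor \log_2 d_i\rfloor - \lfloor \log_2(d_i-1)\rfloor$.

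The crux is to bound this last floor difference. I would argue that for any integer $m \ge 2$, $0 \le \lfloor \log_2 m\rfloor - \lfloor \log_2(m-1)\rfloor \le 1$: the lower bound follows from monotonicity of $\log_2$ together with monotonicity of $\lfloor \cdot \rfloor$, and the upper bound from the fact that $\lfloor \log_2 m\rfloor$ increases by one only as $m$ passes a power of two, so two consecutive integers can straddle at most one such jump (with equality to $1$ holding precisely when $d_i$ is a power of $2$). Summing over $V_{\ge 3}$ then gives $0 \le \sum_{i \in V_{\ge 3}} \delta_i \le |V_{\ge 3}|$, and combining the three pieces yields $2|V_2| \le g \le 2|V_2| + |V_{\ge 3}|$. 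I expect the main subtlety to be bookkeeping rather than anything deep: one must state the DP slack construction explicitly (it is only implicit in the paper) and check the threshold degrees $d_i = 2, 3$ by hand, since these are exactly the values where the TDP/DP slack thresholds and the floor jumps interact.
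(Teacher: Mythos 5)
Your proposal is correct and follows essentially the same route as the paper: both compute $q_{\text{dp}}$ and $q_{\text{tdp}}$ via the degree partition (DP needing $bl_{d_i}$ slack qubits once $d_i \ge 2$ because of the closed neighborhood, TDP needing $bl_{d_i-1}$ only for $d_i \ge 3$), reduce $g$ to $2|V_2| + \sum_{i \in V_{\ge 3}}\bigl(\lfloor \log_2 d_i \rfloor - \lfloor \log_2(d_i-1) \rfloor\bigr)$, and bound the floor difference in $\{0,1\}$. Your added remark that the difference equals $1$ exactly when $d_i$ is a power of two is a sharper observation than the paper makes, but it does not change the argument.
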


\begin{proof}
Based on the preceding discussion, the number of qubits required for TDP, denoted by \( q_{\text{tdp}} \), can be readily expressed as  
\begin{equation}
    q_{\text{tdp}} = |V| + 0 \cdot |V_{0}| + 0 \cdot |V_{1}| + 0 \cdot |V_{2}| + \sum\limits_{i \in V_{\ge 3}}(\lfloor log_2 (d_i - 1)  \rfloor + 1).
\end{equation}
It should be emphasized that the first term, \(|V|\), counts the qubits corresponding to the vertices, while the remaining terms account for the auxiliary qubits introduced solely for modeling purposes. Similarly, based on the comparative analysis of DP and TDP in Section \ref{sec:problemmodeling}, the number of qubits required for DP, denoted by \(q_{\text{dp}}\), can be written as  
\begin{equation}
    q_{\text{dp}} = |V| + 0 \cdot |V_{0}| + 0 \cdot |V_{1}| + 2 \cdot |V_{2}| + \sum\limits_{i \in V_{\ge 3}}(\lfloor log_2 (d_i)  \rfloor + 1).
\end{equation}
Then
\begin{equation}
    g = q_{\text{dp}} - q_{\text{tdp}} = 2|V_{2}| +  \sum\limits_{i \in V_{\ge 3}}(\lfloor log_2 (d_i)  \rfloor - \lfloor log_2 (d_i - 1)  \rfloor),
\end{equation}
and 
\begin{equation*}
    \lfloor log_2 (d_i)  \rfloor - \lfloor log_2 (d_i - 1)  \rfloor \in \{0,1\}, \forall \, d_{i}\ge3, 
\end{equation*}
then we have 
\begin{equation*}
    2|V_2| \leq g \le 2|V_2| + |V_{\ge 3}|.
\end{equation*}
\end{proof}

The validity of Theorem~\ref{theorem:gap} implies that when solving DP and TDP on the same graph, TDP requires fewer qubits. Given that we are currently in the Noisy Intermediate-Scale Quantum (NISQ) era, where qubit resources remain scarce, initiating investigations with TDP is arguably the more appropriate choice.  

At this point, we have completed the process of converting TDP into a Hamiltonian, thereby finalizing the preparations for solving TDP using QAOA. In the subsequent chapters, we will introduce the basic concepts of QAOA.  

\section{QAOA}\label{sec:QAOA}  

We illustrate the basic concepts of the QAOA with reference to Fig. \ref{fig:qaoaflow}. For a quantum system composed of \( n \) qubits, the state vector of the system is represented by the bit string \( z \), where \( z = z_{1}z_{2}z_{3}\dots z_{n} \) and \( z_{i} \in \{0,1\} \), corresponding to the two distinct spin orientations. The system Hamiltonian, \( H_{c} \), was derived from the QUBO model of TDP in the preceding steps. This establishes a mapping from the objective function of the combinatorial optimization problem to the energy of the quantum system. The QAOA algorithm initiates by preparing the initial state \( \ket{s} \) from \( \ket{\underbrace{00\dots 0}_{n}} \) via a Hadamard gate.  

\begin{figure}[H]
  \centering
  \includegraphics[width=12cm]{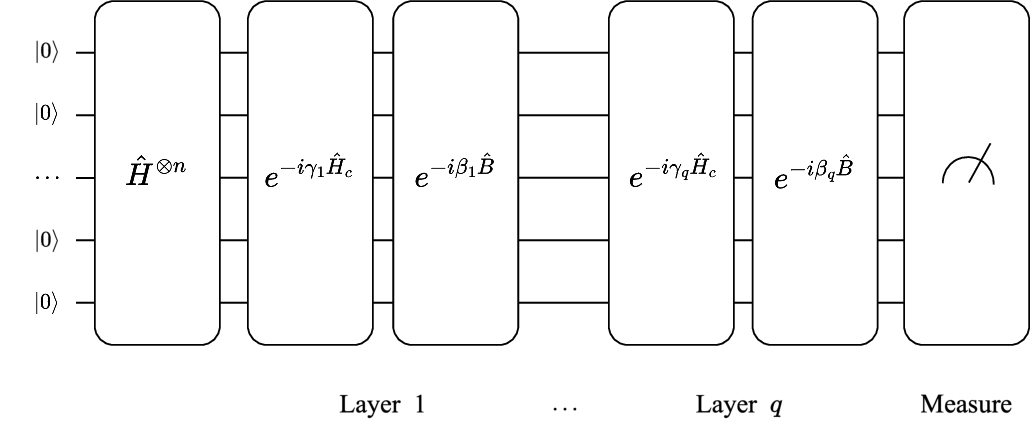}
  \caption {Working flow of QAOA with $q$ layers and $n$ qubits}
  \label{fig:qaoaflow}
\end{figure}
\noindent The definition of the Hadamard gate is
\begin{equation}
    \hat{H} = 2^{-\frac{1}{2}} ([\ket{0} + \ket{1}] \bra{0} + [\ket{0} -\ket{1}] \bra{1}).
\end{equation}
Applying the Hadamard gate to the state \(\ket{0}\) will prepare it in an equal probability superposition of \(\ket{0}\) and \(\ket{1}\).
\begin{equation}
  \hat{H} \ket{0} = 2^{-\frac{1}{2}} [\ket{0} + \ket{1}].
\end{equation}
Apply the Hadamard gate to each qubit,
\begin{equation}
  \ket{s} = \underbrace{\hat{H} \otimes \hat{H} \cdots \otimes \hat{H}}_{n} \ket{\underbrace{00\dots 0}_{n}} = \frac{1}{\sqrt{2^{n}} } \cdot \sum\limits_{z} \ket{z}.
\end{equation}
In the initial state \( \ket{s} \), if we measure the wave function of the quantum system directly, the probability of the system collapsing into any state vector is \( \frac{1}{\sqrt{2^{n}}} \). QAOA applies two types of unitary transformations, \( U(\hat{C},\gamma) \) and \( U(\hat{B},\beta) \) (Eq. \ref{eq:uc}, Eq. \ref{eq:ub}), to the initial state \( \ket{s} \), repeating this process \( q \) times. Here, \( \hat{C} = \hat{H}_c \), \( \hat{B} = \sum\limits_{j=1}^{n} \hat{\sigma}_{j}^{x} \), \( \gamma \in [0,2\pi] \), and \( \beta \in [0,\pi] \).
\begin{align}
  U(\hat{C},\gamma) &= e^{-i\gamma \hat{H}_{c}} \label{eq:uc}\\
  U(\hat{B},\beta) &=  e^{-i\beta \hat{B}} \label{eq:ub}
\end{align}
In Fig. \ref{fig:qaoaflow}, \( \gamma_{q} \) and \( \beta_{q} \) represent the two types of angles at the \( q \)-th layer. After repeating this process \( q \) times, we obtain the final state \( \ket{\gamma,\beta} \).
\begin{equation}
  \ket{\gamma,\beta} = U(\hat{B},\beta_{q})U(\hat{C},\gamma_{q}) \cdots U(\hat{B},\beta_{1})U(\hat{C},\gamma_{1}) \ket{s} \label{{eq:gammabeta}}
\end{equation}
We can obtain the expected value of \( \hat{H}_{c} \), denoted as \( F_{q}(\gamma,\beta) \), by repeatedly loading the circuit and measuring \( \ket{\gamma,\beta} \) multiple times.
\begin{equation}
  F_{q}(\gamma,\beta) = \bra{\gamma,\beta} \hat{H}_{c} \ket{\gamma,\beta}
\end{equation}
Since TDP is a minimization problem, we need to continuously adjust the angles \( \gamma \) and \( \beta \) at each layer to minimize \( F_{q}(\gamma,\beta) \). QAOA is a hybrid algorithm that integrates quantum computing with classical optimization, typically employing classical optimizers such as COBYLA to tune the angles at each layer. The optimization process terminates when the maximum number of iterations is reached or a specified function tolerance is satisfied. At this stage, we obtain the optimal angles \( \gamma_{*} \) and \( \beta_{*} \), which are then used to update the quantum circuit. Multiple samplings are performed, and the bit string \( z_{*} \) with the highest probability from the sampling results is outputted. The final TDS can be extracted from \( z_{*} \).  
\section{Experiment}\label{sec:Experiment}

The experimental environment utilizes an AMD R9 7950X3D CPU with 48 GB of memory. We employed IBM's Qiskit package to construct the QAOA quantum circuits, perform backend simulations, and conduct sampling. The optimizer used is COBYLA from the Scipy optimize package, with a default function tolerance of \( 10^{-8} \). For the initial values of \( \gamma \) and \( \beta \) in QAOA, we adopted the initialization method proposed in \cite{RN458}. The graph used in this study is depicted in Fig. \ref{fig:6nodegraphtdom}. This graph consists of six vertices, and the size of its minimal DS is 2, which can include \{2,5\}, \{0,2\}, \{1,4\}, or \{4,0\}. However, none of these sets meet the conditions for a TDS since all of them contain isolated vertices. Similar to minimal DS, there are multiple minimal TDS, such as \{0,1,2\}, \{0,4,5\}, \{1,2,4\}, and \{2,4,5\}. Considering the different sizes of minimal DS and minimal TDS, this graph effectively helps us validate the performance of QAOA in solving the TDP rather than DP.
\begin{figure}[htbp]
    \centering
    \begin{tikzpicture}[scale=1.2,
    every node/.style={circle, draw, minimum size=1em, inner sep=1pt},
    every edge/.style={draw, thick}]
  \node (0) at (0,1)  {0};
  \node (1) at (1,2)  {1};
  \node (2) at (2,2){2};
  \node (3) at (3,1){3};
  \node (4) at (2,0){4};
  \node (5) at (1,0){5};
  \draw (0) edge (1)
         (0) edge (5)
        (1) edge (2)
        (2) edge (3)
        (3) edge (4)
        (4) edge (5)
        (4) edge (2);
\end{tikzpicture}
\caption{A graph with 6 nodes and 7 edges}
    \label{fig:6nodegraphtdom}
\end{figure}
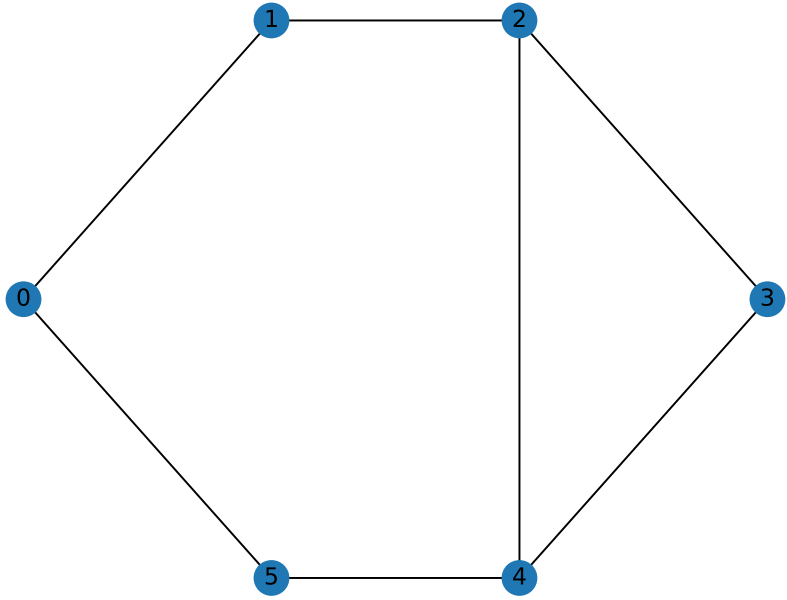

Additionally, based on the conversion method introduced in Section \ref{sec:problemmodeling}, the QUBO model for the TDP of this graph is given by Eq. \ref{eq:expqubo}. This model can be easily transformed into the Hamiltonian by substituting \( x_{*} \) with \( \frac{s_{*} + 1}{2} \) and replacing \( s_{*} \) with \( \hat{\sigma}_{*}^{z} \). For the sake of brevity, we will not expand on this here.
\begin{equation}
  \begin{split}
    \text{minimize} \quad & x_0 + x_1 + x_2 + x_3 + x_4 + x_5 \\
    &+ P \cdot (1 - x_1 - x_5 + x_1 \cdot x_5) \\
    &+ P \cdot (1 - x_0 - x_2 + x_0 \cdot x_2) \\
    & + P \cdot (x_1 + x_3 + x_4 - (x_6 + x_7) - 1)^2 \\
    & + P \cdot (1 - x_2 - x_4 + x_2 \cdot x_4) \\
    & + P \cdot (x_2 + x_3 + x_5 - (x_8 + x_9) - 1)^2 \\
    & + P \cdot (1 - x_0 - x_4 + x_0 \cdot x_4), \quad x_{*} \in \{0,1\} \label{eq:expqubo}
  \end{split}
\end{equation}
The parameters involved in the entire experiment include the layer number \( q \), the punishment coefficient \( P \), and the maximum iterations. To ensure the completeness of the testing, we evaluated the performance of QAOA in solving the TDP across a total of 128 parameter combinations, with \( q \in \{2, 5, 10, 20\} \), \( P \in \{4.8, 5.4, 6.0, 6.6, 7.2, 7.8, 8.4, 9.0\} \), and maximum iterations \( \in \{50, 100, 200, 500\} \). The values of \( P \) correspond to 0.8, 0.9, 1.0, 1.1, 1.2, 1.3, 1.4, and 1.5 times the total number of vertices. This choice of values is inspired by \cite{RN416}.

First, we present some experimental results obtained under specific parameters. We set \( q = 5 \), \( P = 9.0 \), and maximum iterations = 500. Fig. \ref{fig:expdisdom} shows the probability distribution of bit strings from the final sampling results, with the bit string \( z = 100011 \) highlighted in purple due to its highest sampling probability of 0.0655. This bit string's sampling probability is noticeably higher than that of the others. By marking the corresponding vertex set \( \{0,4,5\} \) in the graph (Fig. \ref{fig:d1optdom}), we confirm that this bit string corresponds to a DS that meets the TDS conditions, and that this TDS is minimal. Additionally, in Fig. \ref{fig:expdisdom}, we observe that the sampling probability for \( z = 111000 \) is 0.0653, making it the second highest probability bit string. Through visualization (Fig. \ref{fig:d2optdom}), we see that the corresponding set \( \{0,1,2\} \) forms another minimal TDS. Although our examination of the remaining bit strings in Fig. \ref{fig:expdisdom} did not reveal significantly higher probabilities for the two other potential TDS, \( \{1,2,4\} \) and \( \{2,4,5\} \), we can still affirm that QAOA is effective for solving the TDP under these parameters. In Fig. \ref{fig:costtdom}, we record the trend of cost variation throughout the iterations. We find that, between iterations \( \in [0, 60] \), the cost fluctuates significantly. Once the number of iterations exceeds 60, the cost changes become more stable. This indicates that the QAOA exhibits good convergence when solving the TDP with the current parameters. We believe that the dramatic fluctuations in cost are due to the penalty terms in the Hamiltonian. Since the penalty must exceed the original objective function value, when COBYLA iterates through the angles, it faces situations where some penalty terms become zero while others back to non-zero. This causes sharp variations in cost. However, once most penalties are satisfied and reduced to zero, the cost becomes smoother. This transition signifies a shift in the optimization focus from finding a TDS to identifying a smaller TDS.  

\begin{figure}[H]
  \centering
  \includegraphics[width=12cm]{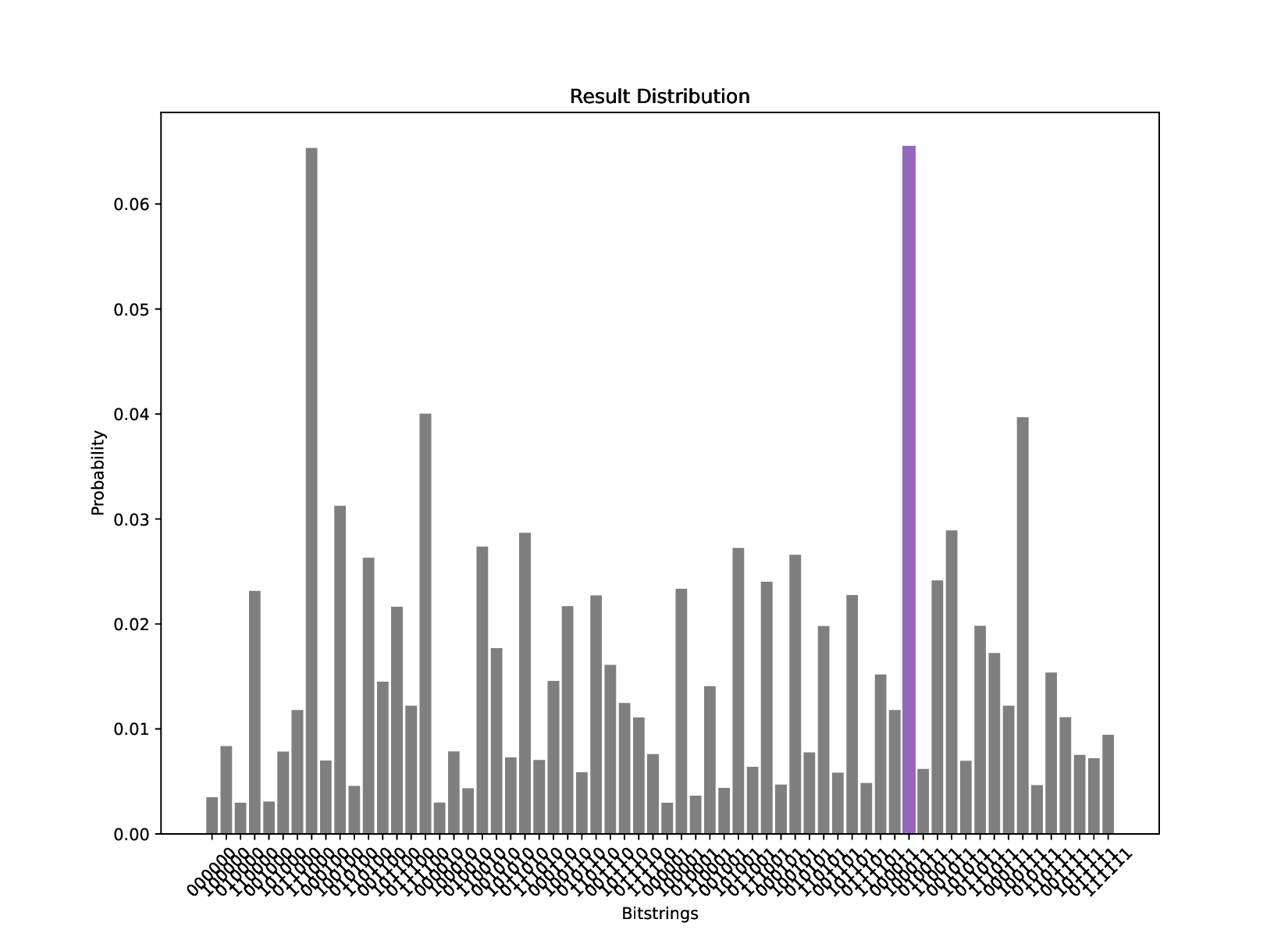}
  \caption {Probability distribution of the final sampling result when $q = 5$, $P = 9.0$, and maximal iterations = 500. The sampling probability for the most probable bit string is highlighted in purple.}
  \label{fig:expdisdom}
\end{figure}


\begin{figure}[htbp]
  \centering
  \begin{tikzpicture}[scale=1.2,
    every node/.style={circle, draw, minimum size=1em, inner sep=1pt},
    every edge/.style={draw, thick}]

    \node(1) at (1,2) {1};
    \node(2) at (2,2) {2};
    \node(3) at (3,1) {3};
    \node[red, fill=red!20] (0) at (0,1) {0};
    \node[red, fill=red!20] (4) at (2,0) {4};
    \node[red, fill=red!20] (5) at (1,0) {5};
    \draw (0) edge (1)
          (0) edge (5)
          (1) edge (2)
          (2) edge (3)
          (3) edge (4)
          (4) edge (5)
          (4) edge (2);
  \end{tikzpicture}
  \caption {Visualization of the bit string 100011 where the vertices in TDS are marked in red}
  \label{fig:d1optdom}
\end{figure}

\begin{figure}[htbp]
  \centering
  \begin{tikzpicture}[scale=1.2,
    every node/.style={circle, draw, minimum size=1em, inner sep=1pt},
    every edge/.style={draw, thick}]
    \node[red, fill=red!20](1) at (1,2) {1};
    \node[red, fill=red!20](2) at (2,2) {2};
    \node(3) at (3,1) {3};
    \node[red, fill=red!20](0) at (0,1) {0};
    \node(4) at (2,0) {4};
    \node(5) at (1,0) {5};
    \draw (0) edge (1)
          (0) edge (5)
          (1) edge (2)
          (2) edge (3)
          (3) edge (4)
          (4) edge (5)
          (4) edge (2);
  \end{tikzpicture}
  \caption {Visualization of the bit string 111000 where the vertices in TDS are marked in red}
  \label{fig:d2optdom}
\end{figure}


\begin{figure}[htbp]
  \centering
  \includegraphics[width=11cm]{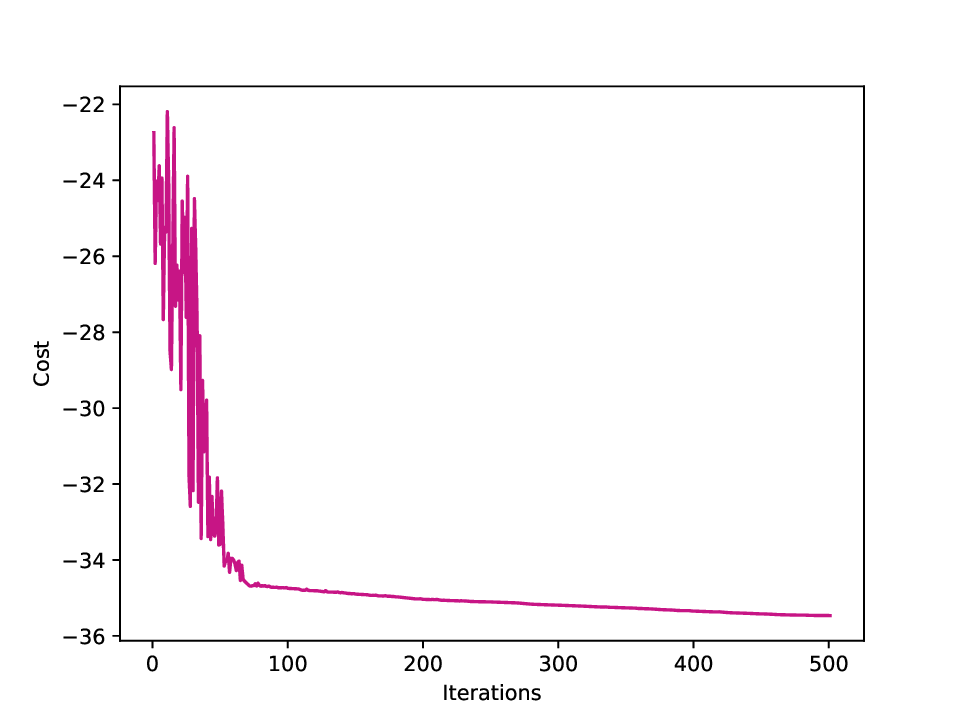}
  \caption {Cost of the QAOA when $q = 5$, $P = 9.0$, and maximal iterations = 500}
  \label{fig:costtdom}
\end{figure}

\begin{figure}[htbp]
  \centering
  \includegraphics[width=11cm]{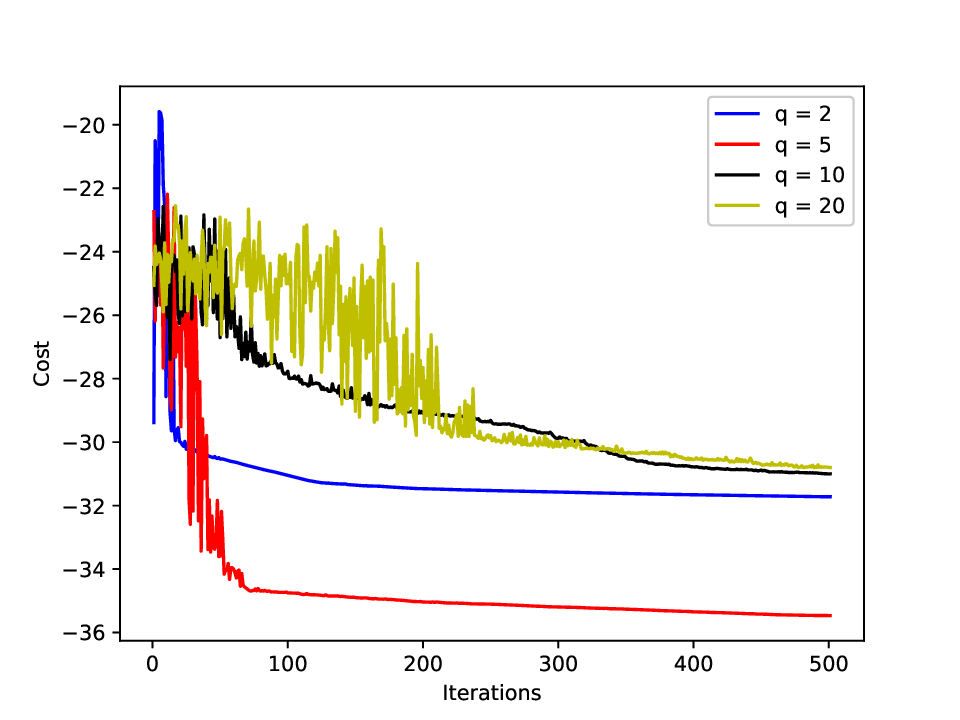}
  \caption {The comparison of cost for different $q$ when $P = 9.0$ and maximal iterations = 500} 
  \label{fig:tdomcostcompare}
\end{figure}

Next, we will present a comparison of QAOA computational results under different parameter configurations. First, in Fig. \ref{fig:tdomcostcompare}, we illustrate the cost variations for different values of \( q \). We observe that when the number of layers \( q \) is larger, such as \( q = 10 \) or \( 20 \), the amplitude and range of cost fluctuations are greater compared to \( q = 2 \) and \( 5 \). Even when the number of iterations reaches its maximum, the costs for \( q = 10 \) and \( 20 \) continue to fluctuate. We believe this is due to a significant number of penalty terms remaining non-zero after the initial fluctuations have ended. Additionally, the increased number of variables associated with more layers may lead the COBYLA algorithm to prematurely converge on local optima. We can confirm that, for the current instance we are using (Fig. \ref{fig:6nodegraphtdom}), setting too many layers for QAOA is ineffective and may require more iterations. For \( q = 2 \) and \( q = 5 \), the cost clearly settles into a local optimum when \( q = 2 \). According to \cite{RN436}, we can derive Eq. \ref{eq:jx}, which indicates that as the number of layers \( q \) increases, the expected value of the sampled \( \hat{H}_c \) approaches the size of the minimal TDS. We believe that in the current instance, the smaller values of \( q \), such as \( 2 \) and \( 5 \), align with this trend. However, the number of layers, the chosen optimization algorithm, maximum iterations, punishment coefficient \( P \), and other parameters may all limit the capabilities of the QAOA designed for larger layer structures. These bottlenecks could ultimately prevent the QAOA from finding better values for \( \gamma \) and \( \beta \) within the specified range of maximum iterations.

\begin{equation}
   \lim\limits_{q \to \infty} \min \limits_{\gamma,\beta}F_{q}(\gamma,\beta)  = \min |TDS| \label{eq:jx}
\end{equation}

\begin{figure}[H]
  \centering
  \includegraphics[width=12cm]{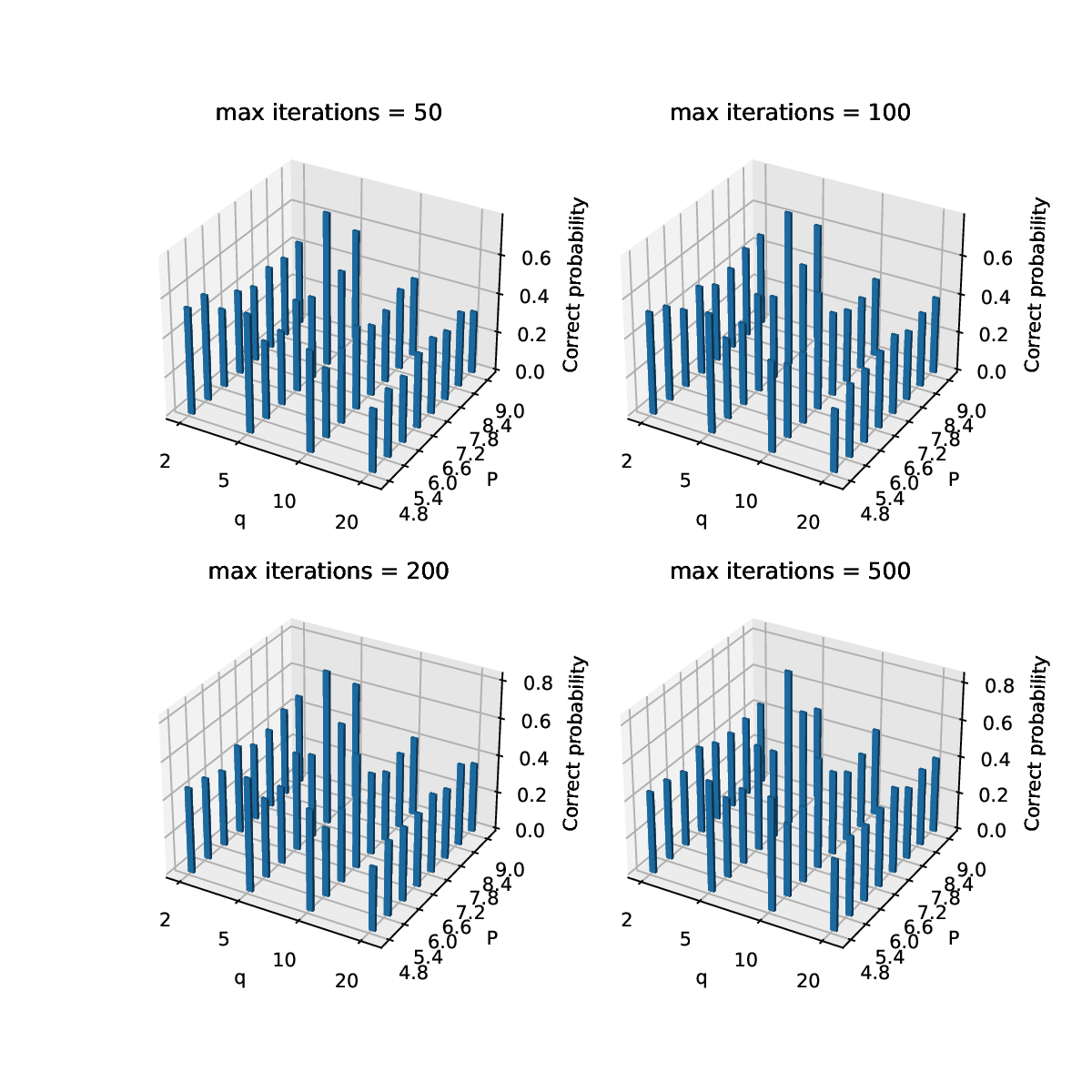}
  \caption {Correct probabilities for $q \in \{2, 5, 10, 20\}$, $P \in \{4.8,5.4,6.6,6.6,7.2,7.8,8.4,9.0\}$ and maximal iterations $\in \{50, 100, 200, 500\}$}
  \label{fig:tdomcpcompare}
\end{figure}
\begin{figure}[H]
  \centering
  \includegraphics[width=12cm]{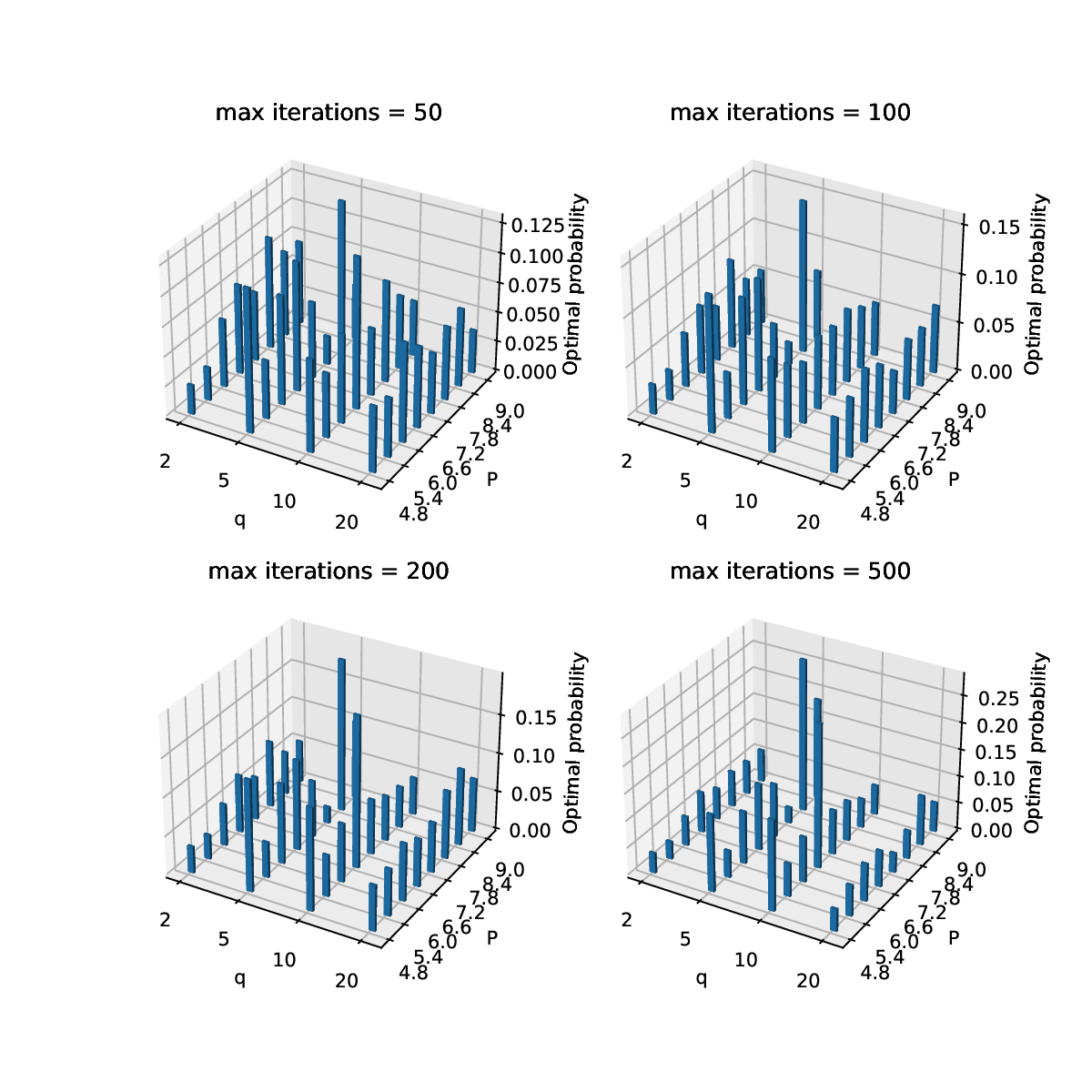}
  \caption {Optimal probabilities for $q \in \{2, 5, 10, 20\}$, $P \in \{4.8,5.4,6.6,6.6,7.2,7.8,8.4,9.0\}$ and maximal iterations $\in \{50, 100, 200, 500\}$}
  \label{fig:tdomopcompare}
\end{figure}
Next, in Figs. \ref{fig:tdomcpcompare} and \ref{fig:tdomopcompare}, we present the total probabilities of obtaining TDS (Correct probability) and minimal TDS (Optimal probability) from all sampled bit strings under different parameter combinations. Overall, we find that within the parameter ranges we set, the probability of QAOA successfully identifying TDS remains fairly stable, generally within the range of [0.3, 0.7]. Higher maximum iterations contribute to increasing the upper probability limit for finding TDS, leading us to conclude that the QAOA can relatively easily identify TDS. Regarding the optimal TDS, we observe that higher probabilities tend to cluster in regions with fewer layers, a trend that holds consistently across different levels of maximum iterations. Furthermore, among the parameter combinations yielding the highest probabilities, we notice that the punishment coefficients are generally larger. One possible explanation for this is that within a certain range, a larger punishment coefficient leads to more pronounced fluctuations in cost during the initial phases, which may assist classical optimization algorithms in escaping local optima during the optimization process.

The results shown in Figs. \ref{fig:distds} and \ref{fig:dismtds} further validate this observation. In Fig. \ref{fig:distds}, we record the distribution of parameter points where \( z_* \) corresponds to TDS, while in Fig. \ref{fig:dismtds}, we document the distribution for minimal TDS. We find that out of 128 parameter combinations, 93 yielded TDS, with 12 resulting in minimal TDS. Additionally, the distribution of parameter points for TDS is quite uniform, whereas the distribution for minimal TDS exhibits a clear tendency, clustering in regions characterized by smaller \( q \) values and larger \( P \) values. This aligns with our earlier analysis. Based on these characteristics, we conclude that when using QAOA to solve TDP, it is quite efficient in finding a TDS and is relatively insensitive to parameter choices. However, to identify the optimal TDS, more careful selection of parameters is necessary.
\begin{figure}[H]
  \centering
  \includegraphics[width=12cm]{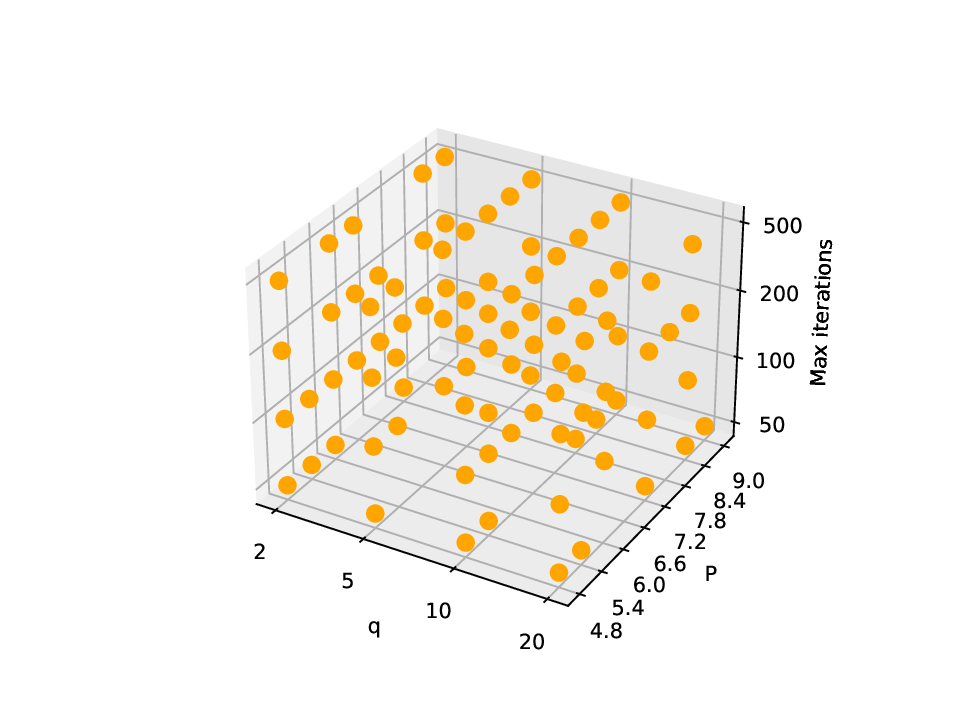}
  \caption {Parameter points for $z_{*}$ which is a TDS}
  \label{fig:distds}
\end{figure}

\begin{figure}[H]
  \centering
  \includegraphics[width=12cm]{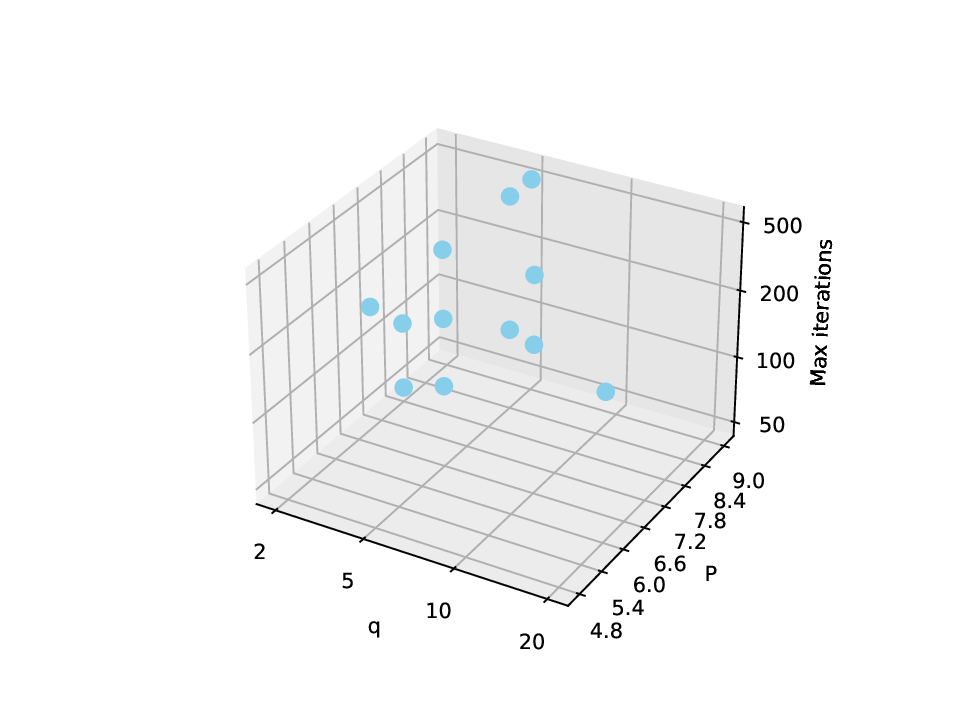}
  \caption {Parameter points for $z_{*}$ which is a minimal TDS}
  \label{fig:dismtds}
\end{figure}
Additionally, in Table \ref{table:diffcom}, we compare the time complexity of QAOA with other algorithms for solving TDP. Here, \( \mathscr{M} \) represents the maximal iterations of COBYLA, \( n \) denotes the number of vertices in the graph, and \( m \) refers to the number of edges. The time complexity analysis of QAOA has already been provided in \cite{RN334}. According to this analysis, the time complexity of QAOA consists of two components. The first component is \( O[poly(q)] \) \citep{RN480}, which is related to the \( q \)-layer structure of the algorithm. The second component is associated with the time complexity of the classical optimization algorithm used. Since we employed COBYLA, its time complexity is \( O[poly(\mathscr{M})] \). Therefore, the overall time complexity is \( O[poly(q) + poly(\mathscr{M})] \). Comparing this with the algorithms listed in Table \ref{table:diffcom}, we find that the time complexity of QAOA is acceptable. Although the algorithm may not currently have a clear advantage for specific classes of graphs, it is designed to be general-purpose, which provides an edge over other algorithms in terms of versatility. Moreover, this study did not optimize the quantum circuits, instead using those generated directly by IBM’s Qiskit. Additionally, we did not tailor or explore other classical optimization algorithms specific to the characteristics of the TDP. These factors indicate that there is significant potential for further development in this work.

\begin{longtable}{|p{2.8cm}|p{4cm}|p{2.8cm}|}
  \caption{The comparison of time complexity of algorithms for TDP}
  \label{table:diffcom}\\
  \hline
   Algorithm & Time complexity & For specific graphs?\\
  \hline
  \endfirsthead
  \multicolumn{3}{r}{Continued}\\
  \hline
  Algorithm & Time complexity & For specific graphs? \\
  \hline
  \endhead
  \hline
  \multicolumn{3}{r}{Continued on next page}\\
  \endfoot
  \endlastfoot
  \hline
  QAOA & $O[poly(q) + poly(\mathscr{M})]$ & No\\
  \hline
  \cite{RN469} & $O(n \ln(n))$ & Interval\\
  \hline
  \cite{RN477}& $O(nm^{2})$ & Cocomparability\\
  \hline
  \cite{RN468}& $O(nm^{2})$ & Cocomparability\\
  \hline
  \cite{RN478} & $O(n^{6})$ & Asteroidal 
  triple-free\\
  \hline
  \cite{RN479} & linear time & Distance 
  hereditary\\
  \hline
  \cite{RN474} & linear time & Tree\\
  \hline
\end{longtable}

\section{Conclusion}\label{sec:conclusion}

This paper investigates the use of QAOA to solve the TDP. In the modeling section, We first model the TDP as a 0-1 integer programming problem, then convert the constraints into quadratic penalties and integrate them into the original objective function, resulting in the QUBO formulation for TDP. By further transforming the QUBO model into a Hamiltonian, we complete the preparation for solving the TDP using QAOA. Also in this section, we rigorously derive the upper bound of the number of qubits required to solve the TDP, and compare it with the number of qubits needed for the classical DP, obtaining an interval range for the gap between them. In the numerical experiment section, We conducted detailed tests of QAOA's performance on TDP using a quantum simulator across 128 parameter combinations. The results indicate that QAOA successfully computes correct TDS for most parameter settings, and in approximately 10\% of cases, it outputs the optimal TDS. This suggests that the ability of QAOA to find optimal TDS is dependent on the parameters used. Additionally, an analysis of the bit string probability distribution in the final sampling results under varying parameters reveals that the accuracy of QAOA remains relatively stable within our selected parameter range, while optimal probabilities tend to favor smaller layer numbers and larger punishment coefficients. Based on the current experimental findings, we conclude that using QAOA to solve TDP is feasible and has significant potential for further development, although a more detailed parameter analysis is necessary to enhance the likelihood of obtaining optimal TDS.

Based on the experimental results and analysis, we identify the limitations of this work as follows: (1) We did not optimize our quantum circuits according to the specific characteristics of the TDP; the circuits used were generated by IBM's Qiskit. (2) The classical optimization algorithm employed could be further refined to minimize the required maximal iterations. (3) We did not conduct practical tests on a quantum computer.

Finally, we believe that the work presented in this paper can be extended in the following directions: (1) Conducting tests on quantum computers using QAOA or other quantum algorithms to evaluate the performance in a broader parameter range and across more test instances for solving the TDP. (2) Further enhancing the probability of QAOA successfully identifying the optimal TDS through detailed parameter range analysis, quantum circuit optimization, and exploring additional classical optimization algorithms. (3) Building on this work to apply the QAOA to solve other variants of the domination problem, such as perfect domination and k-domination.

\section*{Acknowledgement}

The corresponding author would like to acknowledge the support provided by China Communications Information \& Technology Group Co., Ltd. for this work.

\section*{Data Availability}
The data used to support the findings of this study are included within the article.

\section*{Conflicts of interest}
The authors declare that they have no conflicts of interest that could have appeared to influence the work reported in this paper.

\section*{Funding statement}
 
This work is supported by Fundamental Research Funds for the Central Universities and National Natural Science Foundation of China (No. 12331014).

\section*{Declaration of Generative AI}

During the preparation of this work the authors used chatgpt in order to improve readability and language. After using this tool, the authors reviewed and edited the content as needed and take full responsibility for the content of the publication.

 \bibliographystyle{elsarticle-harv} 
 \bibliography{tdom}

\end{CJK}
\end{document}